\newcommand{\Sec}[1]{\hyperref[sec:#1]{\S\ref*{sec:#1}}} 
\newcommand{\Eqn}[1]{\hyperref[eq:#1]{(\ref*{eq:#1})}} 
\newcommand{\Fig}[1]{\hyperref[fig:#1]{Figure~\ref*{fig:#1}}} 
\newcommand{\Tab}[1]{\hyperref[tab:#1]{Table~\ref*{tab:#1}}} 
\newcommand{\Thm}[1]{\hyperref[thm:#1]{Theorem~\ref*{thm:#1}}} 
\newcommand{\Lem}[1]{\hyperref[lem:#1]{Lemma~\ref*{lem:#1}}} 
\newcommand{\Prop}[1]{\hyperref[prop:#1]{Property~\ref*{prop:#1}}} 
\newcommand{\Cor}[1]{\hyperref[cor:#1]{Corollary~\ref*{cor:#1}}} 
\newcommand{\Def}[1]{\hyperref[def:#1]{Definition~\ref*{def:#1}}} 
\newcommand{\Alg}[1]{\hyperref[alg:#1]{Algorithm~\ref*{alg:#1}}} 
\newcommand{\Ex}[1]{\hyperref[ex:#1]{Example~\ref*{ex:#1}}} 
\newcommand{\V}[1]{{\bm{\mathbf{\MakeLowercase{#1}}}}} 
\newcommand{\mb}[1]{\mathbb{#1}}
\newcommand{\mc}[1]{\mathcal{#1}}
\newcommand{\M}[1]{{\bm{\mathbf{\MakeUppercase{#1}}}}} 
\newcommand{\qil}{\M{Q}_{i}^{(\ell)}}
\newcommand{\qilt}{\M{Q}_{i}^{(\ell)\top}}
\title{On deterministic conditions for subspace clustering under missing data}
\author{Wenqi Wang, Shuchin Aeron and Vaneet Aggarwal}
\newtheorem{theorem}{Theorem}[section]
\newtheorem{lemma}[theorem]{Lemma}
\newcommand{\oil}{\Omega_{i}^{(\ell)}}
\newcommand{\oi}{\Omega_i}
\newcommand{\oji}{\Omega_{i,j}}
\newcommand{\nuil}{\V{\nu}_{i}^{(\ell)}}
\newcommand{\lamil}{\V{\lambda}_{i}^{(\ell)}}
\newcommand{\uell}{\M{U}^{(\ell)}}
\newcommand{\vell}{\M{V}^{(\ell)}}
\newcommand{\xell}{\M{X}^{(\ell)}}
\newcommand{\Aell}{\M{A}^{(\ell)}}
\newcommand{\tAell}{\tilde{\M{A}}^{(\ell)}}
\newcommand{\aell}{\V{a}^{(\ell)}}
\begin{document}
\pagenumbering{gobble}

\maketitle

\begin{abstract}
In this paper we present deterministic analysis of sufficient conditions for sparse subspace clustering under missing data, when data is assumed to come from a Union of Subspaces (UoS) model. In this context we consider two cases, namely Case I when all the points are sampled at the same co-ordinates, and Case II when points are sampled at different locations. We show that results for Case I directly follow from several existing results in the literature, while results for Case II are not as straightforward and we provide a set of dual conditions under which, perfect clustering holds true. 
We provide extensive set of simulation results for clustering as well as completion of data under missing entries, under the UoS model. Our experimental results indicate that in contrast to the full data case, accurate clustering does not imply accurate subspace identification and completion, indicating the natural order of relative hardness of these problems.
\end{abstract}

\section{Introduction}
In this paper we consider the problem of data clustering under the union of subspaces (UOS) model \cite{soltanolkotabi2012,Elhamifar:2012uz}, when each data vector is sampled in an element-wise manner. This is referred to as the case of \emph{missing data}. In other words we are looking to harvest a union of subspaces structure from the data, when the data is missing. Such a problem has been recently considered in a number of papers \cite{bal2,bal3,ErikssonArXiv2011,Yang2015}. This setting has implications to data completion under the union of subspaces model in contrast to the single subspace model that has been prevalent in the matrix completion literature. In contrast to statistical analysis in \cite{bal2,bal3,ErikssonArXiv2011}, this paper uses a variant of the sparse subspace clustering (SSC) algorithm \cite{Elhamifar:2012uz} to give sufficient deterministic conditions for accurate subspace clustering under missing data. In contrast to \cite{Yang2015}, which does not provide any specific conditions for success of SSC under missing data, in this paper we provide implications of the deterministic conditions for several specific cases of sampling. Further through extensive simulations we demonstrate for the first time that accurate clustering under missing data does not imply accurate subspace clustering and completion thereby indicating the natural order of hardness of these problems under missing data.



\section{Problem set-up}
\label{sec:prob_setup}

We are given a set of data points collected as columns of a matrix $\M{X}$, i.e. $\M{X}_{i}, i = 1,2,...,N$ such that, $\M{X}_i \in \bigcup_{\ell =1}^{L} \mb{S}^{(\ell)}$, where $\mb{S}^{(\ell)}$ is a subspace of dimension $d_\ell$ in $\mb{R}^{d}$, for $\ell = 1,2,...,L$. Further, each data point $\M{X}_i$ is sampled at $\Omega_i$ co-ordinates (randomly or deterministically) and the problem is to identify the subspaces $\mb{S}^{(\ell)}$ under missing data. In order to derive meaningful performance guarantees for the proposed algorithm, we consider the following generative model for the data. Let $\xell$ denote the set of vectors in $\M{X}$ which belong to subspace $\ell$. Let $$ \xell = \uell \Aell ,$$ where the $N_\ell$ columns of $\Aell$ are drawn from the unit sphere $\mc{S}^{d_\ell-1}$ and $\uell$ is a matrix with orthonormal columns, whose columns span the subspace, $\mb{S}^{(\ell)}$.  Under missing data, point $\xell_i$ is sampled at locations $\oil$ and 
\begin{align} \xell_{\oi} = \mathbf{I}_{\oil} \uell \V{a}_{i}^{(\ell)}
\end{align}
where $\mathbf{I}_{\oil}$ is a diagonal matrix with $\mathbf{I}_{\oil}(k,k) = 1$ iff $ k \in \oil$. It is essentially a zero filled $\M{X}_{i}^{(\ell)}$.




For the sake of exposition, in the following we will assume that $d_\ell = d$ for all $\ell = 1,2,...,L$.

Given this set-up the problem is to accurately cluster the data points such that within each cluster the data points belong to the same (original) subspace. 
\vspace{-3mm}
\subsection{The Algorithm: SSC-LP}
\label{sec:alg}
Our algorithm as presented below is a minor variation of the SSC-EWZF algorithm in \cite{Yang2015} in that instead of solving a Lasso problem we solve a linear program (LP) to estimate the coefficient matrix. The steps of the algorithm are as follows. 
\begin{enumerate}
\item For each $i$ solve for $$ \arg \min \|\V{c}_{i}\|_1: \,\, \mathbf{I}_{\oi} \M{X}_{\oi} = \mathbf{I}_{\oi} \M{X}_{-i,\Omega} \V{c}_{i} $$ where $\M{X}_{\oi}$ denotes the data point $\M{X}_i$ with zeros filling at non-sampled locations and $\M{X}_{-i,\Omega}$ denotes the zero-filled data points except the $i$ data point. 
\item Collect the $\V{c}_i$ into a matrix $\M{C}$ and apply spectral clustering \cite{Luxburg:2007dq} to $\M{A} = |\M{C}| + |\M{C}|^\top$
\end{enumerate}

\section{Analysis of the algorithm}
\label{sec:analysis}

\subsection{Case I}

\begin{figure}[htbp]
\centering \makebox[0in]{
    \begin{tabular}{c c}
      \includegraphics[scale=0.4]{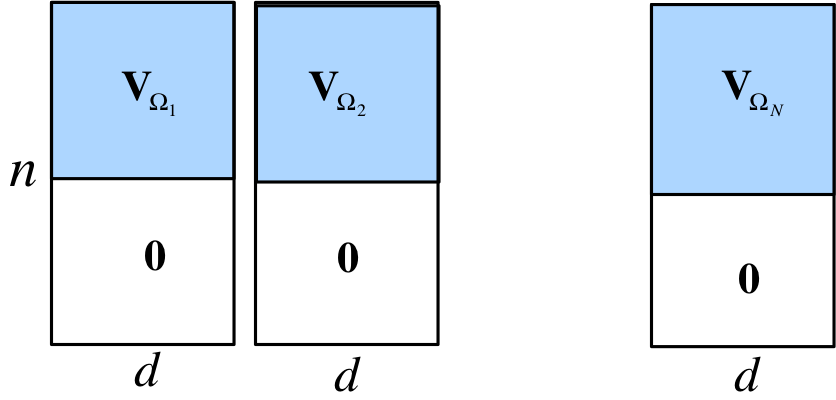}
 \end{tabular}}
  \caption{ Schematic depicting Case 1. All $\Omega_i = \Omega$.}
  \label{fig:Case1}
\end{figure}
Let $\Omega_{i} = \Omega$ for all $i$. In this case, all the points are sampled at the same locations. Let point $i$ belong to subspace $\ell$.  

Let $ \vell_{\Omega}= \mathbf{I}_{\Omega} \uell \in \mb{R}^{|\Omega| \times d}$ where $\mathbf{I}_{\Omega}$ denotes the projection onto the (non-zero) co-ordinates in $\Omega$. The following theorem summarizes the conditions under which SSC-EWZF algorithm results in correct clustering. 

\begin{theorem}
\label{thm:1}
Let $\oil = \Omega$ for all $i, \ell$ and $|\Omega| \geq d$.  Then SSC-LP leads to correct clustering if for all $ i \in [N_\ell]$, $k \neq \ell$, the following holds,
\begin{align}
\label{eq:case1}
\left| \frac{\V{\lambda}_{i}^{(\ell) \top}}{\| \lamil\|_2} (\vell_{\Omega})^{\dagger} \M{V}_{\Omega}^{(k)} \V{a}_{j}^{(k)} \right| < r_{in}(\mc{P}({\M{A}}_{-i}^{(\ell)}))\,\, ,
\end{align} 
where,
\begin{enumerate}
\setlength \itemsep{5pt}
\item $\lamil \in \arg \max_{\V{\lambda}} \langle \V{a}_{i}^{(\ell)}, \V{\lambda} \rangle\,\, : \,\, \| \M{A}_{-i}^{(\ell) \top} \V{\lambda}\|_{\infty} \leq 1$
\item $\mc{P}(\Aell_{-i}) = \{ \V{v}: \V{v} =  \Aell_{-i} \V{b}: \|\V{b}\|_{1} = 1\}$, and  $r_{in}(\mc{P}({\M{A}}_{-i}^{\ell}))$ denotes the in-radius of the centro-symmetric body. 
\end{enumerate}
\end{theorem}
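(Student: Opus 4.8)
The plan is to reduce Case I to the ordinary full-data SSC problem living in the projected ambient space $\mb{R}^{|\Omega|}$, and then to invoke the standard primal--dual (incoherence versus inradius) guarantee for SSC. Because $\oil = \Omega$ for every point, the zero-filled data restricted to $\Omega$ satisfies $\mathbf{I}_\Omega \M{X}_i = \vell_\Omega \V{a}_i^{(\ell)}$ whenever point $i$ lies in subspace $\ell$, and both sides of the equality constraint in Step~1 of SSC-LP vanish outside $\Omega$. Hence the program is \emph{identical} to the full-data SSC-LP run on the projected points $\{\vell_\Omega \V{a}_i^{(\ell)}\} \subset \mb{R}^{|\Omega|}$, which again obey a UoS model: since $|\Omega| \ge d$, each $\vell_\Omega = \mathbf{I}_\Omega \uell$ has full column rank $d$, so the columns of $\vell_\Omega$ span a genuine $d$-dimensional subspace. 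Thus correct clustering is equivalent to the subspace-preserving (no false connections) property for this projected arrangement.

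Fixing a point $i$ in subspace $\ell$, I would first analyze the reduced LP that uses only same-subspace atoms, $\min \|\V{c}\|_1$ subject to $\vell_\Omega \V{a}_i^{(\ell)} = \vell_\Omega \M{A}_{-i}^{(\ell)} \V{c}$. Left-multiplying by $(\vell_\Omega)^{\dagger}$ and using $(\vell_\Omega)^{\dagger}\vell_\Omega = \mathbf{I}$ (full column rank) collapses this exactly to the coefficient-space program $\min \|\V{c}\|_1 : \V{a}_i^{(\ell)} = \M{A}_{-i}^{(\ell)} \V{c}$, whose dual is precisely the program defining $\lamil$ in item~(1). The subspace-preserving property then holds provided the reduced primal optimizer, padded with zeros on the out-of-subspace atoms, remains optimal for the full LP. I would certify this by constructing an ambient dual vector $\boldsymbol{\nu} \in \mb{R}^{|\Omega|}$ that reproduces $\lamil$ on the subspace, namely the minimum-norm solution of $(\vell_\Omega)^{\top} \boldsymbol{\nu} = \lamil$, i.e. $\boldsymbol{\nu} = (\vell_\Omega)^{\dagger\top}\lamil$. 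This choice makes the same-subspace dual constraints exactly the feasibility condition $\|\M{A}_{-i}^{(\ell)\top}\lamil\|_\infty \le 1$ and matches the optimal objective value, while the out-of-subspace atoms $\M{V}_\Omega^{(k)}\V{a}_j^{(k)}$ impose the additional strict inequalities $|\boldsymbol{\nu}^{\top}\M{V}_\Omega^{(k)}\V{a}_j^{(k)}| = |\lamil^{\top}(\vell_\Omega)^{\dagger}\M{V}_\Omega^{(k)}\V{a}_j^{(k)}|$ being small.

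Finally I would rewrite this external test in the stated normalized form: dividing the dual direction by $\|\lamil\|_2$ and invoking the incoherence-versus-inradius characterization of the subspace detection property (as in \cite{soltanolkotabi2012}), the admissible threshold becomes the in-radius $r_{in}(\mc{P}(\Aell_{-i}))$ of the symmetrized coefficient polytope, which is exactly \Eqn{case1}. The main obstacle I anticipate is matching the normalization precisely: one must justify that $\vell_\Omega$ retains rank $d$ so the pseudoinverse identities and the equivalence with the coefficient-space dual are valid, and, more delicately, identify the correct threshold as $r_{in}(\mc{P}(\Aell_{-i}))$ together with the $\|\lamil\|_2$-normalized dual direction, rather than merely the crude dual-feasibility bound $|\lamil^{\top}(\vell_\Omega)^{\dagger}\M{V}_\Omega^{(k)}\V{a}_j^{(k)}| < 1$. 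This step amounts to importing the geometric inradius argument of the standard SSC analysis verbatim onto the projected data, and the bookkeeping that the pseudoinverse $(\vell_\Omega)^{\dagger}$ transports the external points into the subspace-$\ell$ coordinate system in which $\lamil$ and the polytope $\mc{P}(\Aell_{-i})$ are defined.
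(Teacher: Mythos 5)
Your proposal is correct and follows essentially the same route as the paper: the paper's ``proof'' is a one-line appeal to Theorem 6 of \cite{dim_red_sc_ext}, whose content is exactly the argument you reconstruct --- viewing common-coordinate sampling as a fixed linear projection of the data, collapsing the same-subspace program to coefficient space via $(\vell_{\Omega})^{\dagger}$, lifting the coefficient-space dual point to the ambient certificate $\boldsymbol{\nu} = (\vell_{\Omega})^{\dagger\top}\lamil$ so that Lemma \ref{lem:1} applies, and converting the resulting strict inequalities on cross-subspace atoms into the normalized incoherence-versus-inradius form via the polar bound $\|\lamil\|_2 \leq 1/r_{in}(\mc{P}(\Aell_{-i}))$.

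One caveat: your assertion that $|\Omega| \geq d$ alone forces $\vell_{\Omega} = \mathbf{I}_{\Omega}\uell$ to have full column rank $d$ is false in general (take $\uell$ to contain a standard basis vector supported outside $\Omega$). Full column rank must be \emph{assumed}, e.g.\ via sufficient incoherence of $\uell$ with the coordinate basis, exactly as the paper does explicitly in its stated assumptions for Cases II and III. Since your entire reduction --- the identity $(\vell_{\Omega})^{\dagger}\vell_{\Omega} = \mathbf{I}$, the equivalence of the reduced LP with the coefficient-space LP, and the dual lifting --- hinges on this, you should state it as a hypothesis rather than derive it from $|\Omega| \geq d$.
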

\begin{proof}
The proof directly follows from the proof of [Theorem 6 in \cite{dim_red_sc_ext}].
\end{proof}

\textbf{Implications of the condition in Theorem \ref{thm:1}}: 

\begin{enumerate}
\setlength \itemsep{-2pt}
\item We note that the RHS in Equation (\ref{eq:case1}) is the same as the RHS for the full observation case, see \cite{soltanolkotabi2012}. Therefore, the performance degradation in clustering comes from increase in the LHS on an average under missing data.
\item The increase in RHS depends on how badly conditioned $\M{V}_{\Omega}^{(\ell)}$ is. If some of the singular values of $\M{V}_{\Omega}^{(\ell)}$ are very small, it will make the LHS in Equation \ref{eq:case1} large. To further understand this, let us consider a \emph{semi-random} model where in each subspace the data points are generated by choosing $\V{a}_{i}^{(\ell)}$ uniformly randomly on a unit sphere \cite{soltanolkotabi2012}. In this case it is easy to show that $\dfrac{\V{\lambda}_{i}^{(\ell) \top}}{\| \lamil\|_2}$ are also uniformly distributed on the unit sphere \cite{dim_red_sc_ext} and the \emph{expected value} of the LHS becomes $ \dfrac{\|(\vell_{\Omega})^{\dagger} \M{V}_{\Omega}^{(k)}\|_{F}}{d}$. This is essentially the (unnormalized) co-ordinate restricted coherence between subspaces $\ell$ and $k$. 
\item If the subspaces are sufficiently incoherent with the standard basis or satisfy an RIP like property for large enough $|\Omega|$, then the condition number of $\M{V}_{\Omega}^{(\ell)}$ is controlled and one can expect to obtain similar performance as the full observation case.

\item Note that \emph{while in this setting one can ensure perfect clustering, one cannot ensure either perfect subspace identification or completion}. This is because in this case the deterministic necessary conditions for identification and completion \cite{2014arXiv1410.0633P} are not satisfied. This indicates that clustering is an easier problem compared to subspace identification and completion under missing data.
\end{enumerate}

Here we would like to mention that the notion of in-radius is related to the notion of \textbf{permeance} \cite{Lerman2012ArXiv} of data points in a given subspace, quantifying how well the data is distributed inside each subspace. In-radius can be thought of as a worst-case permeance that doesn't scale with the number of data points, while permeance scales with the number of data points and is more of an averaged criteria. Perhaps this is the reason that the primal-dual analysis of SSC under full observation is not able to support the empirical evidence that as the number of points per subspace increases the clustering error goes down dramatically. For subspace clustering such the effect of the number of data points was shown more explicitly in a recent paper \cite{GSC2014}. A connection between these two quantities, namely the in-radius and permeance for subspace clustering under missing data will be undertaken in a future work.\\
We will analyze the more general case next. Analysis of the general case is hard and to gain \emph{insights} we break it up into several results. For this we introduce the following notation. 

\begin{figure}[htbp]
\centering \makebox[0in]{
    \begin{tabular}{c c}
      \includegraphics[scale=0.4]{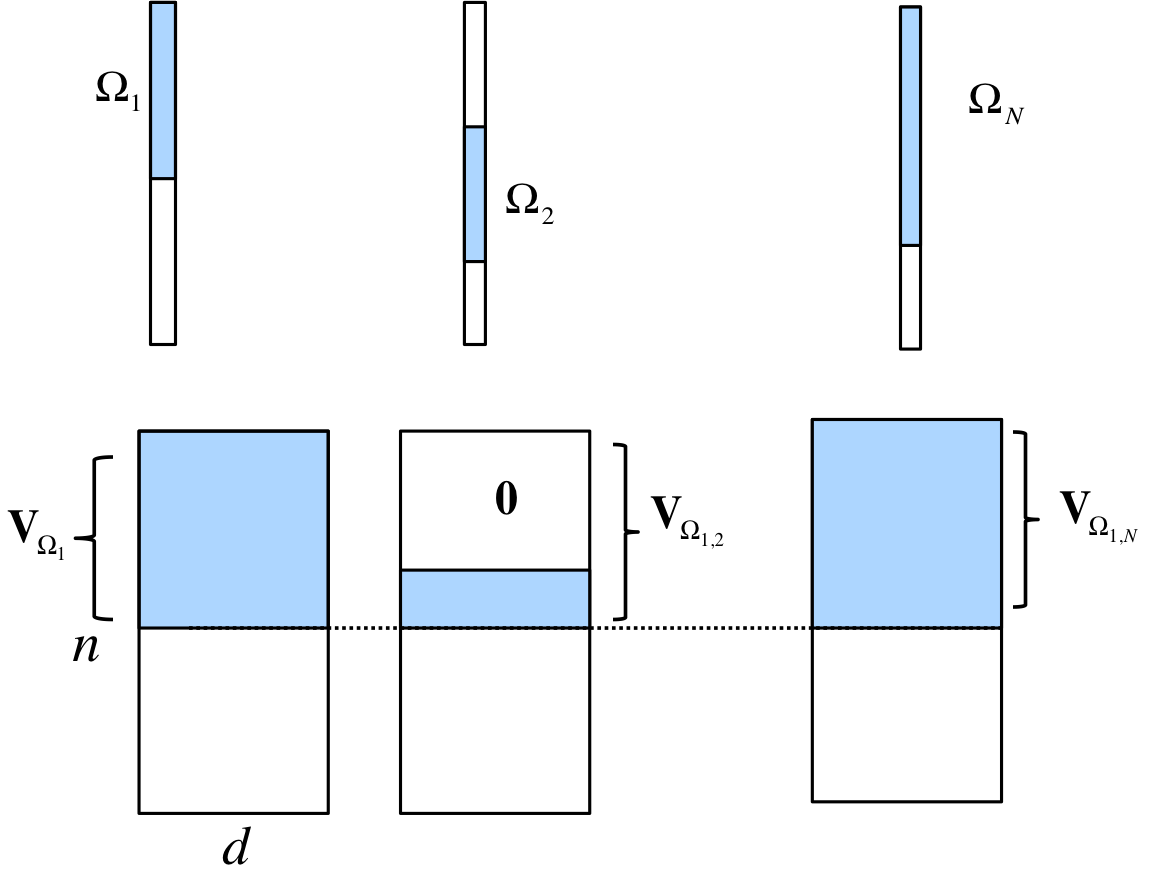}
 \end{tabular}}
  \caption{ Schematic depicting the case when all data is sampled at different locations and the notation for the basis $\M{V}_{\Omega_{i,j}}$ zero-filled and restricted to $\Omega_i$. Here we show for the case when $i =1$.}
  \label{fig:Case2}
\end{figure}

\textbf{Notation}: 

\begin{enumerate}
\setlength \itemsep{3pt}
\item Let $\M{V}_{\oi}^{(\ell)} = \M{I}_{\oil} \M{U}^{(\ell)}$. For other data points, $\M{X}_j$ if $\M{X}_j$ belongs to subspace $k \in \{1,2,..,L\}$, $\M{V}_{\Omega_{i,j}^{(k)}} = \M{I}_{\oil} \M{U}^{(k)}$ if $\Omega_{i}^{(\ell)} \subseteq \Omega_{j}^{(k)}$, else $\M{V}_{\Omega_{i,j}^{(k)}} = \M{I}_{\oil \cap \Omega_{j}^{(k)}} \M{U}^{(k)}$, where $\M{I}_{\oil \cap \Omega_{j}^{(k)}}$ is an $|\oil| \times |\oil|$ diagonal matrix with diagonal entries $1$ at locations in   $\M{I}_{\oil \cap \Omega_{j}^{(k)}}$ and $0$ otherwise. This is depicted in Figure, \ref{fig:Case2}.
\item Let $\M{V}_{\oi}^{(\ell)} = \M{Q}_{i}^{(\ell)} \Sigma_{i}^{(\ell)} \M{R}_{i}^{(\ell) \top}$ and $\M{V}_{\Omega_{i,j}}^{(\ell)} = \M{Q}_{i,j}^{(\ell)} \Sigma_{i,j}^{(\ell)} \M{R}_{i,j}^{(\ell) \top}$ be the SVDs of the truncated basis matrices. \\
We will refer to $\M{Q}_{i}^{(\ell)},\M{Q}_{i,j}^{(\ell)}$ as corresponding to \emph{co-ordinate restricted subspaces}.

\item Let \begin{align}
\label{eq:hata1}
\tilde{\V{a}}_{j}^{(\ell)} &= \M{Q}_{i}^{(\ell) \top} \M{V}_{\oji}^{(\ell)} \aell_{j} \,\,\forall j \\
 & = \M{Q}_{i}^{(\ell) \top} \M{Q}_{j}^{(\ell)} \underset{\hat{\V{a}}_{i,j}^{(\ell)}}{\underbrace{\Sigma_{i,j}^{(\ell)} \M{R}_{i,j}^{(\ell) \top} \aell_{j}}}
\end{align} and let $\tilde{\M{A}}_{-i}^{(\ell)}$ be the $r \times (N_\ell-1)$ matrix with columns as $\tilde{\V{a}}_{j}^{(\ell)}\,\, , j \neq i$.
\end{enumerate}

\subsection{Case II}

First we consider the case when $|\oi| = d$, i.e. we want to understand, if a given point is sampled at only $d$ locations, when can SSC-LP correctly assign it to the cluster of points from the same subspace.  Without loss of generality let the point come from the subspace $\ell$.

\textbf{Assumption}: For all possible supports $\Omega_{i}^{(\ell)}$, we assume that  $\M{Q}_{i}^{(\ell)} \in \mb{R}^{d \times d}$ is invertible. This assumption will be satisfied if the columns of $\M{U}^{(\ell)}$ are sufficiently incoherent with the standard basis. Under this assumption, we have the following theorem.
%

\begin{theorem}
\label{thm:2}
 Let $|\oil| = d$.  Then SSC-LP correctly assigns it to the data points coming from subspace $\ell$, if for all $k \neq \ell$ and $j$ , the following holds,
\begin{align}
\label{eq:case2}
\left| \frac{\V{\lambda}_{i}^{(\ell) \top}}{\| \lamil\|_2} \M{Q}_{i}^{(\ell)\top} \M{V}_{\Omega_{i,j}}^{(k)} \V{a}_{j}^{(k)} \right| < r_{in}(\mc{P}(\tilde{\M{A}}_{-i}^{(\ell)}))\,\, ,
\end{align} 
where,
\begin{enumerate}
\setlength \itemsep {5pt}
\item $\lamil \in \arg \max_{\V{\lambda}} \langle \hat{\V{a}}_{i}^{(\ell)}, \V{\lambda} \rangle\,\, : \,\, \| \tilde{\M{A}}_{-i}^{(\ell) \top} \V{\lambda}\|_{\infty} \leq 1 $
\item $\mc{P}(\tAell_{-i}) = \{ \V{v}: \V{v} =  \tAell_{-i} \V{b}: \|\V{b}\|_{1} = 1\}$, and  $r_{in}(\mc{P}(\tilde{\M{A}}_{-i}^{\ell}))$ denotes the in-radius of the centro-symmetric body. 
\end{enumerate}
\end{theorem}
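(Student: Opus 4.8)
The plan is to exploit the invertibility of $\qil$ to reduce the missing-data program for point $i$ to a fully observed $\ell_1$ self-expression problem posed in the $d$-dimensional co-ordinate restricted subspace, and then to run the primal-dual certificate argument that already underlies \Thm{1} (Theorem~6 of \cite{dim_red_sc_ext}). Since $|\oil| = d$, the matrix $\M{V}_{\oi}^{(\ell)} = \M{I}_{\oil}\uell$ is square, and the standing assumption that $\qil$ is invertible makes it an orthogonal $d\times d$ matrix, so $\qilt\qil = \M{I}$. Left-multiplying the LP constraint $\M{I}_{\oi}\M{X}_{\oi} = \M{I}_{\oi}\M{X}_{-i,\Omega}\V{c}_i$ by $\qilt$ is therefore an invertible change of variables that leaves the feasible set in $\V{c}_i$, and hence the minimizer, unchanged.

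First I would make the transformed constraint explicit. The zero-filled observation of point $i$ restricted to $\oi$ is $\M{V}_{\oi}^{(\ell)}\aell_i = \qil\Sigma_i^{(\ell)}\M{R}_i^{(\ell)\top}\aell_i$, so after multiplying by $\qilt$ the target becomes $\Sigma_i^{(\ell)}\M{R}_i^{(\ell)\top}\aell_i = \hat{\V{a}}_i^{(\ell)}$. A competing point $j$ from subspace $k$ contributes $\M{I}_{\oi}\M{I}_{\Omega_j^{(k)}}\M{U}^{(k)}\V{a}_j^{(k)} = \M{V}_{\oji}^{(k)}\V{a}_j^{(k)}$, so its transformed column is $\qilt\M{V}_{\oji}^{(k)}\V{a}_j^{(k)}$; for the same-subspace points $k=\ell$ these are precisely the columns $\tilde{\V{a}}_{j}^{(\ell)} = \qilt\M{V}_{\oji}^{(\ell)}\aell_j$ of $\tAell_{-i}$. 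The transformed program is thus
\[ \min_{\V{c}_i}\ \|\V{c}_i\|_1 \qtext{s.t.} \hat{\V{a}}_i^{(\ell)} = \tAell_{-i}\,\V{b} + \sum_{k \neq \ell}\sum_{j} c_{ij}\,\qilt\M{V}_{\oji}^{(k)}\V{a}_j^{(k)}, \]
where $\V{b}$ collects the same-subspace coefficients; this has exactly the structure of a full-data self-expression problem with ``correct'' dictionary $\tAell_{-i}$.

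Next I would certify that the minimizer is supported only on the same-subspace block. The restricted program $\min\|\V{b}\|_1$ subject to $\hat{\V{a}}_i^{(\ell)} = \tAell_{-i}\V{b}$ has dual $\max\la\hat{\V{a}}_i^{(\ell)},\V{\lambda}\ra$ over $\{\V{\lambda}:\|(\tAell_{-i})^\top\V{\lambda}\|_\infty\le 1\}$, and by construction $\lamil$ is its optimizer (condition~1). Strong LP duality and complementary slackness make $(\tAell_{-i})^\top\lamil$ a valid $\ell_1$-subgradient at the restricted optimum $\V{b}^\star$ and give $\la\hat{\V{a}}_i^{(\ell)},\lamil\ra = \|\V{b}^\star\|_1$. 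Using $\lamil$ as a certificate for the full program, any feasible $(\V{b},\V{g})$ (with $\V{g}$ the cross-subspace coefficients) satisfies $\|\V{b}^\star\|_1 = \la\hat{\V{a}}_i^{(\ell)},\lamil\ra \le \|\V{b}\|_1 + \mu\|\V{g}\|_1$, where $\mu = \max_{k\neq\ell,\,j}|\la\lamil,\qilt\M{V}_{\oji}^{(k)}\V{a}_j^{(k)}\ra|$; whenever $\mu<1$ this forces any optimal $\V{g}$ to vanish, which is the correct-assignment claim. Finally, dual feasibility places $\lamil$ in the polar of the centro-symmetric body $\mc{P}(\tAell_{-i})$, whose circumradius equals $1/r_{in}(\mc{P}(\tAell_{-i}))$, so $\|\lamil\|_2\le 1/r_{in}(\mc{P}(\tAell_{-i}))$; multiplying \eqref{eq:case2} through by $\|\lamil\|_2$ then yields $\mu<1$, closing the reduction.

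I expect the main obstacle to be conceptual rather than computational. In \Thm{1} all points share the support $\Omega$, so $\qil$ and $\M{Q}_j^{(\ell)}$ coincide, the rotation $\qilt\M{Q}_j^{(\ell)}$ is the identity, and the transformed same-subspace columns collapse to the clean quantities $\hat{\V{a}}_{i,j}^{(\ell)}$; under Case~II these rotations genuinely mix distinct co-ordinate restricted subspaces, so after the change of variables neither the same-subspace columns nor the cross-subspace columns $\qilt\M{V}_{\oji}^{(k)}\V{a}_j^{(k)}$ retain any clean subspace structure, and the subspace-separation geometry of the full-data analysis is no longer available. The real work is therefore to verify that the certificate argument survives using only the rotated coherence appearing in \eqref{eq:case2}, and to keep the in-radius measured against the rotated dictionary $\tAell_{-i}$ rather than $\Aell_{-i}$. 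One also needs $\hat{\V{a}}_i^{(\ell)}\in\mathrm{range}(\tAell_{-i})$ for the restricted program to be feasible, which requires enough same-subspace points to span the transformed subspace; this is implicit in the hypotheses.
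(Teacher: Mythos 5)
Your proposal is correct and is essentially the argument the paper intends: the paper's own ``proof'' is a one-line deferral to Theorem \ref{thm:1} and to Theorem~6 of \cite{dim_red_sc_ext} ``with slight modifications,'' and those modifications are exactly what you carry out --- the change of variables by the orthogonal $\M{Q}_{i}^{(\ell)\top}$ reducing the missing-data LP to a full-data self-expression problem with dictionary $\tAell_{-i}$, followed by the standard primal--dual certificate argument with the polar-body bound $\|\lamil\|_2 \leq 1/r_{in}(\mc{P}(\tAell_{-i}))$. Your writeup in fact supplies more detail than the paper, including the correct observation of where Case~II genuinely departs from Case~I (the rotations $\qilt\M{Q}_{j}^{(\ell)}$ no longer cancel) and the feasibility caveat $\hat{\V{a}}_{i}^{(\ell)}\in\mathrm{range}(\tAell_{-i})$, without which the condition is vacuous since $r_{in}=0$.
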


\begin{proof}
The proof follows from slight modifications of the the arguments of the proof of Theorem \ref{thm:1}, which in turn follows from the proof of [Theorem 6 in \cite{dim_red_sc_ext}].
\end{proof}

\textbf{Implications of condition in Theorem \ref{thm:2}}:
\begin{itemize}
\setlength \itemsep{-2pt}

\item The in-radius $r(\mc{P}^{\circ}(\tilde{\M{A}}_{-i}^{(\ell)}))$ now \emph{depends on the sampling pattern within the subspace and it is different for different set of points}. Specifically note that it depends on the \emph{intrinsic subspace coherence} between the sampled points within the same subspace. In general it can be very small if the number of points $\M{X}_{j}^{(\ell)}$ (within the same subspace) that have good amount of overlap with $\M{X}_{i}^{(\ell)}$ is small.



\item The subspace coherence also depends on the \emph{relative} sampling pattern of points in other subspaces with respect to the point under consideration. This is reflected in the coherence term $\M{Q}_{i}^{(\ell)\top} \M{V}_{\Omega_{i,j}}^{(k)}$. Unlike Case I, for the semi-random model the distribution of the normalized dual direction is not uniform and we cannot comment on the average case performance. However, a worst-case condition can be obtained here from Equation \ref{eq:case2} which says that if \begin{align} \label{eq:case2a1}
\|\M{Q}_{i}^{(\ell)\top} \M{V}_{\Omega_{i,j}}^{(k)}\|_{2} \leq r_{in}(\mc{P}^{\circ}(\tilde{\M{A}}_{-i}^{(\ell)}))\end{align} then SSC-LP leads to correct clustering. This ofcourse requires that the co-ordinate projected subspaces be sufficiently incoherent as well as \emph{disjoint}.
\end{itemize}

\vspace{-3mm}
\subsection{Case III}

With these \textbf{insights} let us now proceed to handle the more general case when $|\oil| > d$. In order to do that let us first provide a geometric result using Lemma \ref{lem:1} (see Appendix) and then find a useful characterization of the dual certificate.\\

Again let  $\M{V}_{\Omega_i}^{(\ell)} = \M{Q}_{i}^{(\ell)} \Sigma_{i}^{(\ell)} \M{R}_{i}^{(\ell)\top}$ be the SVD of $\M{V}_{\Omega_i}^{(\ell)}$. Let $\M{x}_{\oil,j}$ denote the zero filled $\M{I}_{\Omega_j}\M{X}_j$ restricted to the indices in $\oil$ for all $j \in [N]$. Let $\M{X}_{-i,\oil}$ denote the matrix of these points except point $i$.

 Then SSC-LP solves $ \arg \min \|\V{c}_{i}\|_1: \,\, \M{X}_{i,\oil} = \M{X}_{-i,\oil} \V{c}_{i} $. Let $\V{c}_{i}^{(\ell)}$ and $\V{\nu}_{i}^{(\ell)}$ be solutions to,
$$ \arg \min \|\V{b}\|_1: \,\, \M{X}_{i,\oil}^{(\ell)} = \M{X}_{-i,\oil}^{(\ell)} \V{b}$$ 
$$ \arg \max \langle \M{X}_{i,\oil}^{(\ell)}, \V{\nu} \rangle: \,\, \| \M{X}_{-i,\oil}^{(\ell)\top} \V{\nu}\|_\infty \leq 1 $$ 

\noindent Assuming that \emph{strong duality} holds, the vectors $$\V{c} = [ \V{0},\cdots, \V{0}, \V{c}_{i}^{(\ell)},\cdots, \V{0}]^\top$$ and $\V{\nu}_{i}^{(\ell)}$ satisfy the conditions of Lemma \ref{lem:1} (See Appendix), hence guaranteeing correct subspace recovery if for all $k \neq \ell, j$, the following condition is satisfied. 
\begin{align} | \V{\nu}_{i}^{(\ell)\top} \M{X}_{\oil,j}^{(k)}| < 1 \end{align}

As such this condition does not provide any further intuition or insights. In order to arrive at results that are similar in flavor to Theorems \ref{thm:1} and \ref{thm:2}, let us further analyze this condition. We have the following \textbf{Assumption}: The matrix $\M{Q}_{i}^{(\ell)}$ has full column rank. This assumption is satisfied if columns of $\M{U}^{(\ell)}$ are sufficiently incoherent with respect to the co-ordinate basis.

Let $\M{Q}_{i}^{(\ell)} \M{Q}_{i}^{(\ell)\top}$ denote the projection matrix that projects a vector onto the subspace spanned by columns of $\qil \qilt$ and $(\M{I} - \qil \qilt)$ denote the null space of $\qilt$. Note that, $$\nuil =  \qil \qilt \nuil + (\M{I} - \qil \qilt)\nuil .$$ 
Then the condition for correct clustering becomes,
\begin{align} | ( \qil \qilt \nuil + (\M{I} - \qil \qilt)\nuil)^{\top} \M{X}_{\oil,j}^{(k)}| \leq 1 \end{align}
By triangle inequality if for some $0\leq \alpha \leq 1$, 
\begin{align}
\label{eq:case3a1} | (\nuil)^{\top} \M{Q}\M{Q}^{\top} \M{X}_{\oil,j}^{(k)}| < \alpha \end{align} 
and
\begin{align}
\label{eq:case3b1}
| (\nuil)^{\top} ( \M{I} - \M{Q}\M{Q}^{\top}) \M{X}_{\oil,j}^{(k)}| \leq (1-\alpha)
\end{align}
then the point $\M{X}_{i}^{(\ell)}$ is correctly clustered. Based on this we have the following theorem.

\begin{theorem}
\label{thm:3}
If for any point $\M{x}_i$ belonging to subspace $\ell$, the following conditions are satisfied,
\begin{align}
\label{eq:case3a}
\|\M{Q}_{i}^{(\ell)\top} \M{V}_{\Omega_{i,j}}^{(k)}\|_{2} < \alpha r_{in}\left(\mc{P}(\M{X}_{-i,\oil}^{(\ell)\top})\right)
\end{align}
\begin{align}
\|(\M{I} - \qil \qilt) \M{V}_{\Omega_{i,j}}^{(k)}\|_{2} \leq (1- \alpha) r_{in}\left(\mc{P}(\M{X}_{-i,\oil}^{(\ell)})\right)
\end{align}
for some $0 \leq \alpha \leq 1$, then SSC-LP succeeds. Here $\mc{P}(\M{X}_{-i,\oil}^{(\ell)}) = \{ \V{v}: \V{v} =  \M{X}_{-i,\oil}^{(\ell)} \V{b}: \|\V{b}\|_{1} = 1\}$, and  $r_{in}(\mc{P}(\M{X}_{-i,\oil}^{(\ell)}))$ denotes the in-radius of the centro-symmetric body
\end{theorem}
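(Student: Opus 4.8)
The plan is to close the chain of sufficient conditions that the text has already set up. Correct clustering is guaranteed as soon as $|\nuil^{\top}\M{X}_{\oil,j}^{(k)}| < 1$ for every out-of-subspace index $(k,j)$ (this is the output of Lemma~\ref{lem:1} under strong duality), and the orthogonal splitting $\nuil = \qil\qilt\nuil + (\M{I}-\qil\qilt)\nuil$ together with the triangle inequality reduces that single inequality to the pair (\ref{eq:case3a1}) and (\ref{eq:case3b1}). Hence the entire task is to show that the two hypotheses of the theorem force (\ref{eq:case3a1}) and (\ref{eq:case3b1}) in turn; I would establish both bounds from one template, differing only in which projector is inserted.

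For the component lying in the range of $\qil$, I would write $\M{X}_{\oil,j}^{(k)} = \M{V}_{\Omega_{i,j}}^{(k)}\V{a}_{j}^{(k)}$ with $\|\V{a}_{j}^{(k)}\|_{2}=1$, and use that $\qil\qilt$ is an orthogonal projector to get, via Cauchy--Schwarz and submultiplicativity,
\begin{align*}
|\nuil^{\top}\qil\qilt\M{X}_{\oil,j}^{(k)}| \;\le\; \|\qilt\nuil\|_{2}\,\|\qilt\M{V}_{\Omega_{i,j}}^{(k)}\|_{2},
\end{align*}
and the identical computation with $(\M{I}-\qil\qilt)$ in place of $\qil\qilt$. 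The two coherence factors $\|\qilt\M{V}_{\Omega_{i,j}}^{(k)}\|_{2}$ and $\|(\M{I}-\qil\qilt)\M{V}_{\Omega_{i,j}}^{(k)}\|_{2}$ are precisely the quantities that the two hypotheses cap by $\alpha\,r_{in}$ and $(1-\alpha)\,r_{in}$. To finish, I would bound the dual-certificate norms by the reciprocal in-radii through the polarity argument underlying Theorem~\ref{thm:1} (following [Theorem~6 in \cite{dim_red_sc_ext}]): the feasibility constraint $\|\M{X}_{-i,\oil}^{(\ell)\top}\nuil\|_{\infty}\le 1$ places $\nuil$ in the polar of the centro-symmetric body $\mc{P}(\M{X}_{-i,\oil}^{(\ell)})$, and since a ball of radius $r_{in}(\mc{P}(\cdot))$ sits inside that body, the polar lies in a ball of radius $1/r_{in}(\mc{P}(\cdot))$; selecting the certificate to lie in the span of the within-subspace points then gives $\|(\M{I}-\qil\qilt)\nuil\|_{2}\le 1/r_{in}(\mc{P}(\M{X}_{-i,\oil}^{(\ell)}))$, while the refined estimate after restriction to $\mathrm{col}(\qil)$ gives the companion bound $\|\qilt\nuil\|_{2}\le 1/r_{in}(\mc{P}(\M{X}_{-i,\oil}^{(\ell)\top}))$. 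Substituting both norm bounds into the two displayed estimates and cancelling the in-radii against the right-hand sides of (\ref{eq:case3a}) and its unlabelled companion produces exactly $<\alpha$ and $\le 1-\alpha$, i.e. (\ref{eq:case3a1}) and (\ref{eq:case3b1}).

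The step I expect to be the main obstacle is the control of the dual certificate's component orthogonal to $\qil$. Unlike Cases~I and~II, the within-subspace points $\M{X}_{j,\oil}^{(\ell)}$ no longer all lie in $\mathrm{col}(\qil)$, since they are sampled on supports $\oil\cap\Omega_{j}^{(\ell)}$ that differ from $\oil$; consequently $\nuil$ genuinely carries mass outside $\mathrm{col}(\qil)$, and naive dual feasibility alone does not bound it, because the polar is unbounded in directions orthogonal to the span of the points. The delicate point is therefore to invoke the complementary-slackness conditions forced by Lemma~\ref{lem:1}, rather than mere feasibility, in order to pin the certificate inside the span of the within-subspace points, and to verify that the two distinct in-radii in the hypotheses are the correct normalizers for the two projected components. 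Getting this accounting right is where the argument must be carried out carefully rather than by analogy with the full-data case.
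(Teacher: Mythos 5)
Your proposal is correct and takes essentially the same approach as the paper, whose ``proof'' of Theorem~\ref{thm:3} is simply the discussion that precedes it (Lemma~\ref{lem:1} under strong duality, the orthogonal splitting $\nuil = \qil\qilt\nuil + (\M{I}-\qil\qilt)\nuil$, and the triangle-inequality reduction to (\ref{eq:case3a1})--(\ref{eq:case3b1})); you merely make explicit the Cauchy--Schwarz and polar/in-radius bounds on the dual certificate that the paper leaves unstated. Your fix for the orthogonal component---restricting the certificate to the span of the within-subspace points, which is legitimate because Lemma~\ref{lem:1} already presupposes primal feasibility, so $\M{X}_{i,\oil}^{(\ell)}$ lies in that span and projecting any dual optimum onto it preserves feasibility and optimality---is exactly the right way to close the gap the paper glosses over, and the transpose inside $\mc{P}(\M{X}_{-i,\oil}^{(\ell)\top})$ in (\ref{eq:case3a}) is best read as a typo (a single in-radius, relative to that span, normalizes both components) rather than as a genuinely different quantity requiring your ``refined estimate.''
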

We note the following points. 
\begin{enumerate}
\item Compared to Theorems \ref{thm:1} and \ref{thm:2}, the conditions stated in Theorem \ref{thm:3} are worst-case and slightly weaker in the sense that they require the co-ordinate projected  subspaces to be disjoint.

\item Also note that for the most general case it is required that \emph{co-ordinate restricted subspace coherence} be small BUT ALSO that the \emph{projection error of the other set of points} onto the co-ordinate restricted subspace for the point under consideration be small. When $|\oil|$ is large, Equation \ref{eq:case3a} is more likely to be satisfied compared to Equation \ref{eq:case2a1}.

\item In the most general case, while more observations (per vector) is required, in this case under correct clustering if the sampling patterns and the number of data points satisfy the necessary and sufficient conditions in \cite{2014arXiv1410.0633P,2015arXiv150302596P} then it is possible to also correctly identify the subspaces and also ensure completion.
\end{enumerate}

\begin{figure}[htbp]
\centering
\vspace{-25mm}
\includegraphics[ width = .35\textwidth]{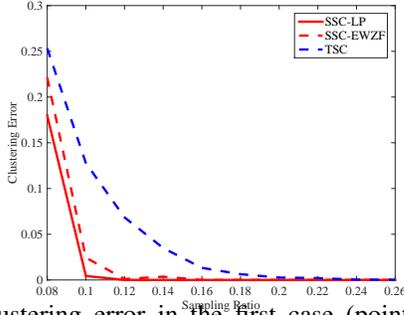} 
 \vspace{-23mm}
\caption{Clustering error in the first case (points sampled at same $pn$ co-ordinates)  for varying $p$.}
\end{figure}

\begin{figure*}[t!]
	\centering
	\begin{subfigure}[b]{0.32\textwidth}
		\centering
	\includegraphics[trim=.45in 1.5in .8in 2.8in, clip, width = \textwidth]{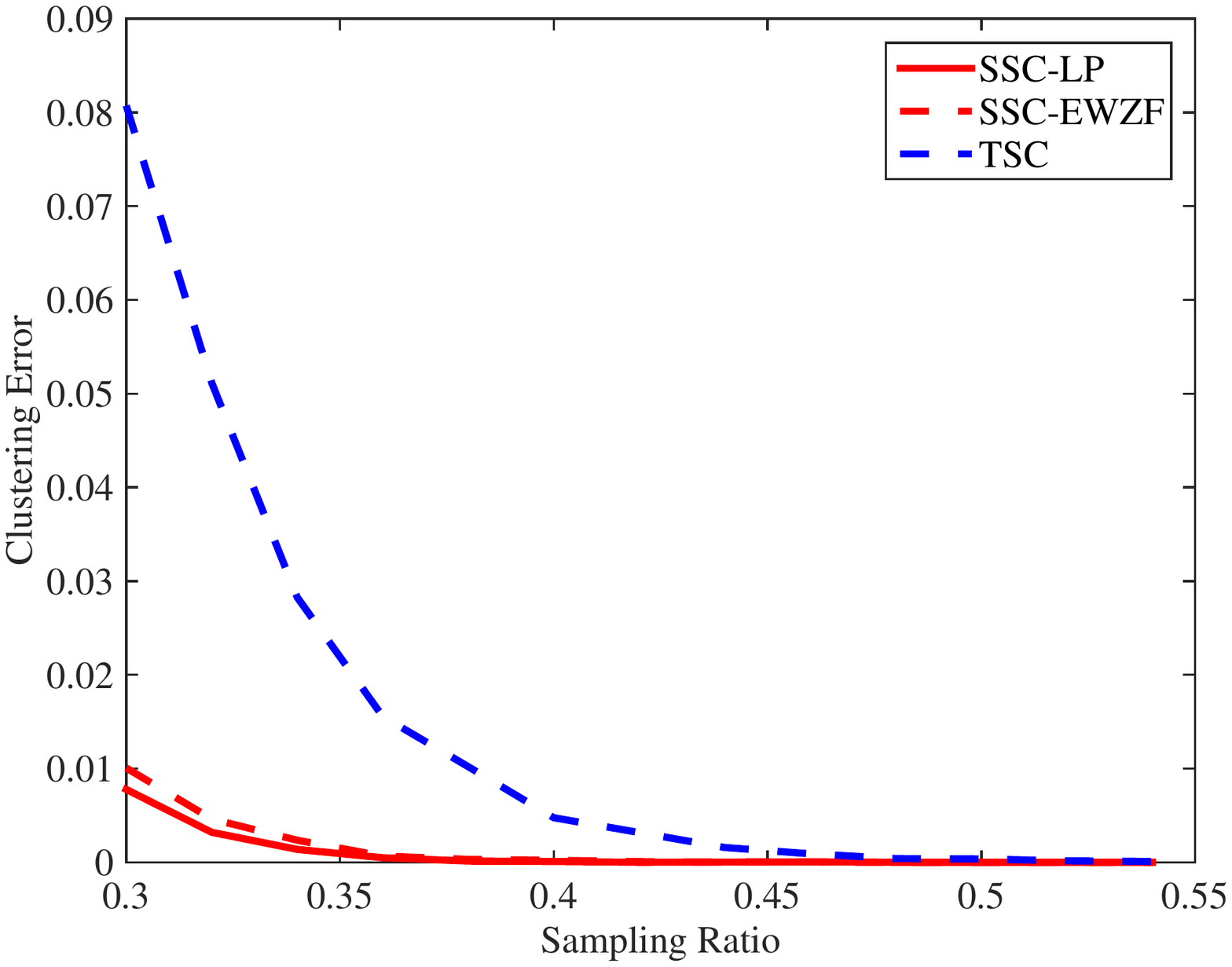}\vspace{-.3in}
		\caption{Clustering Error}
	\end{subfigure}%
	~ 
	\begin{subfigure}[b]{0.32\textwidth}
		\centering
	\includegraphics[trim=.6in 1.5in .8in 2.8in, clip, width = \textwidth]{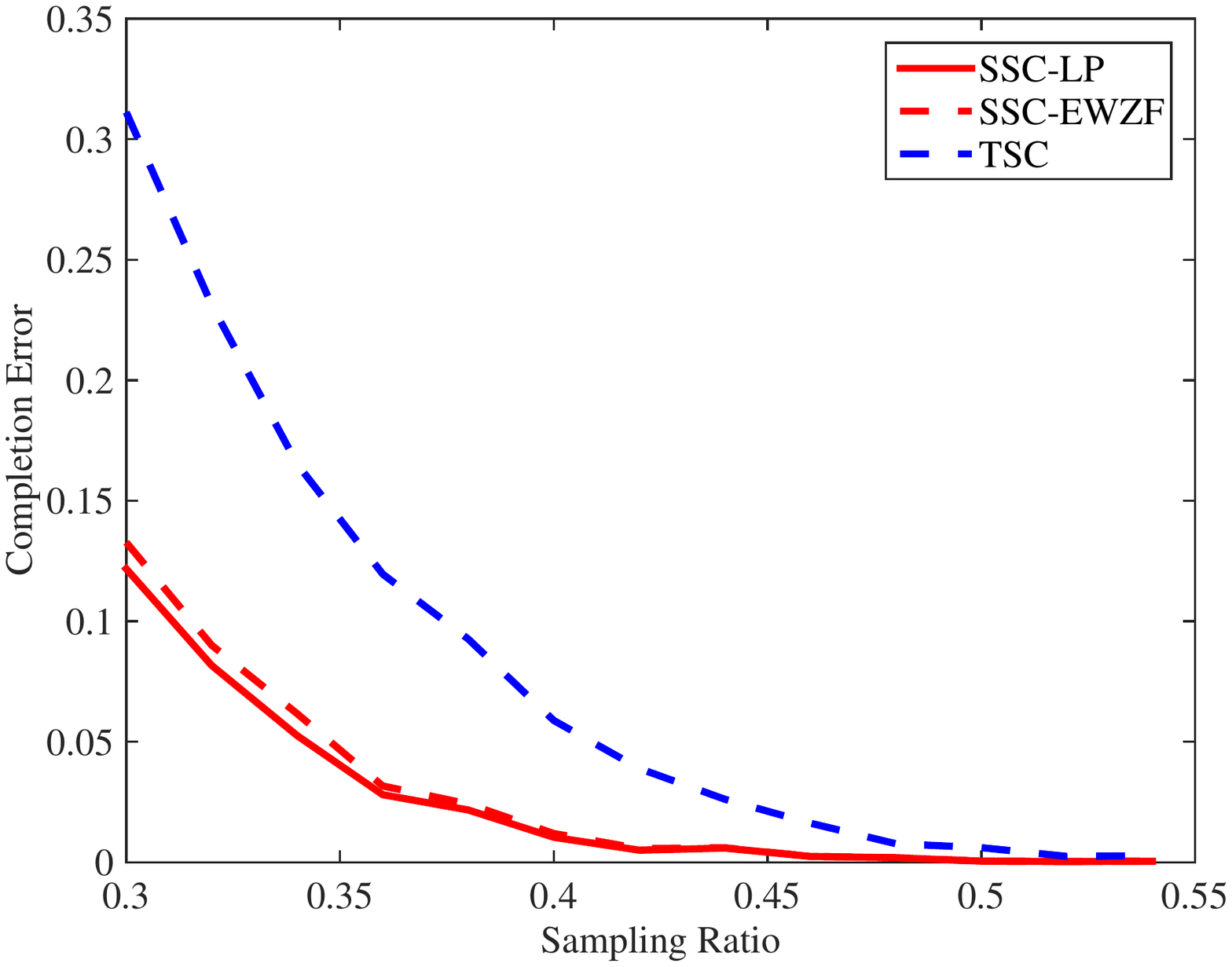}\vspace{-.3in}
		\caption{Completion Error}
	\end{subfigure}
		~ 
		\begin{subfigure}[b]{0.32\textwidth}
			\centering
			      \includegraphics[trim=.5in 1.5in .8in 2.8in, clip, width = \textwidth]{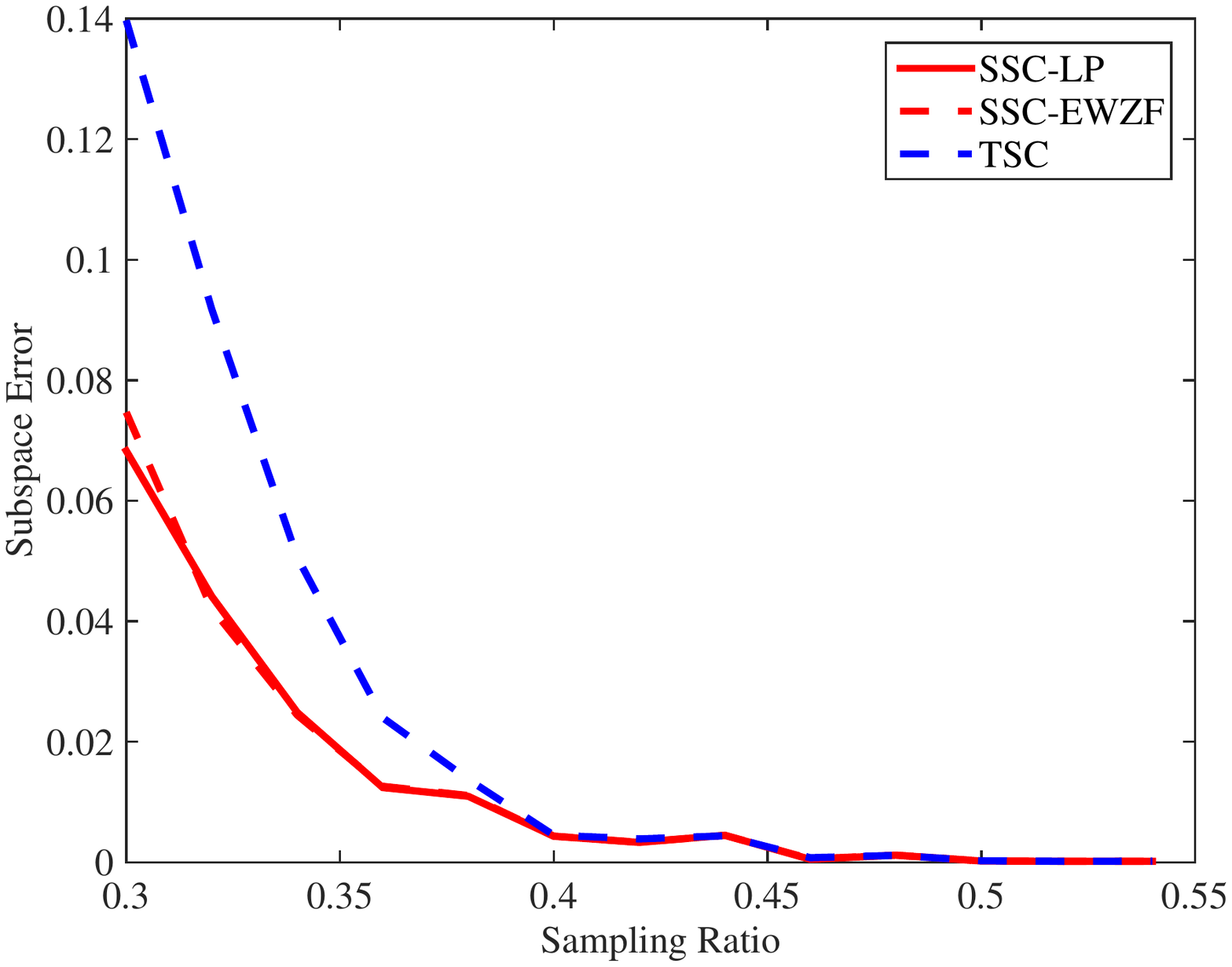}\vspace{-.3in}
			      \caption{Subspace Error}
		\end{subfigure}
	\caption{Different error metrics  for the case of random sampling.} \label{fig:tensor}\vspace{-.1in}
\end{figure*}

\vspace{-10mm}

\section{Numerical Results}

In this section, we will see the numerical performance of the proposed algorithm, SSC-LP, as compared to two other baseline algorithms for data clustering and completion under the UoS model with missing data. The data is generated by using $L$ rank $d$ subspaces, each composed of $N_l$ vectors of dimension $n$. We assume $n=50,\  L=3,\   d=3,$ and $N_l=150$ for the numerical results. Data in each subspace is generated by a multiplication of standard entry-wise Gaussian distributed $n\times d$ matrix and a standard entry-wise Gaussian distributed $d\times N_l$ matrix. 

We compare our algorithm with two other algorithms. The first is SSC-EWZF \cite{Yang2015}, which solves the Lasso problem rather than the linear program in this paper. This algorithm is selected as it gives the lowest clustering error among the different algorithms considered in \cite{Yang2015} (e.g., SSC-EC, SSC-CEC, ZF-SSC, MC-SSC).  The performance of SSC-EWZF algorithm largely depends on the choice of $\lambda$, which is chosen to be
\begin{align}
\lambda =\frac{\alpha}{\max_{i\neq j} |X_{\Omega_i}^\top X_{\Omega _j}|_{ij}},
\end{align}
where $\alpha $ is a tuning parameter\cite{Yang2015},  set to be 7.34 for our experiment (  selected by performing an optimized performance for different values of $\alpha$). 
The second is TSC algorithm proposed in \cite{tsc2014}, which builds adjacency matrix by thresholding correlations.  The thresholding parameter $q$ is given as
\begin{align}
q=\sqrt{N_l\log (N_l) },
\end{align}
such that $q$ is of an order smaller than  $N_l$ and larger than $\log N_l$  \cite{tsc2014}.
Since our proposed algorithm considers only observed entries when building adjacency matrix for each data, TSC algorithm could be a good comparison in respect to the way of building adjacency matrix. 

In our simulations, we consider two cases. The first case illustrates Case 1 in this paper,  where all the all the points are sampled at the same co-ordinates which are the  first $pn$ co-ordinates. 
The second case corresponds to Case 3 in this paper,  where missing data in each point is randomly sampled at a rounded value of $pn$ co-ordinates thus giving a sampling rate of approximately $p$ and a missing rate of approximately $1-p$. All the results are averaged over $100$ runs for the choice of the data and the sampled elements.

The comparisons for data clustering and  completion are performed using three metrics as explained further. The first metric  is the clustering error.  Clustering error is the ratio calculated by the number of wrongly classified data divided by the number of total data, same as defined in \cite{Yang2015}.
The clustering error for different sampling ratio $p$ in the first case is shown in Fig. 3, where SSC-LP shows the best clustering accuracy for all sampling ratios from 8\% to 26\%. Furthermore, the plot indicates that SSC-LP clusters data perfectly with $p=0.1$, equivalently at $5$ observations out of $50$ entries, while SSC-EWZF and TSC require $12$ and $11$ observations, respectively. Since the rank of each cluster is $3$ and only $5$ observations for each point are needed, we see that observing data at same co-ordinates need much less data for efficient clustering. We note that we cannot identify the subspace or complete the data with these observations further illustrating that clustering requires less number of observations than that required for data completion.

The clustering error for random sampling, the scenario described in the second case, with sampling ratio from 0.25 to 0.95 is shown in Fig 4(a), 
where we note that clustering error with our proposed algorithm, SSC-LP,  is the minimum among the three algorithms. Furthermore, the plot shows that the  sampling ratio at which the clustering error hits zero for the algorithms SSC-LP, SSC-EWZF and TSC are 0.38, 0.42, and 0.54, respectively. Thus, the proposed algorithm required least number of observations to efficiently cluster the data. We further note that the amount of data needed to cluster efficiently for random sampling (38\%) is larger than that for observing data at the same co-ordinates (10\%).  

The second metric is the completion error. Let the recovered matrix using a clustering algorithm be the output of matrix completion using SVT method 
\cite{candes2009exact} on the subspaces found as a result of the subspace clustering and the true matrix be the ground truth of the matrix with missing data. Then the recovery difference is defined as the matrix difference between recovered matrix and true matrix. Thus the completion error is measured by ratio of the Frobenius Norm of the recovery difference to the Frobenius Norm of the true matrix. 
The completion error for different values of $p$ can be seen  in Fig 4(b), 
 where we see completion error is positively correlated with clustering error and small percentage clustering error can result in large percentage completion error. Similar to the clustering error, SSC-LP has the lowest completion error among the three algorithms 
and the completion errors for SSC-LP, SSC-EWZF and TSC becomes zero at sampling ratio of around 0.50, 0.50, and 0.54 respectively, which are larger than the corresponding thresholds for the clustering errors.  

The third metric is the subspace error.  Subspace of a matrix with missing data can be recovered by finding the orthonormal basis of the recovered matrix. 
Projection difference is calculated by minus the projection of orthonormal basis of the recovered matrix from the orthonormal basis of the true matrix, and consequently, angle based subspace error is measured by $\arcsin$ of the normal of the projection difference, same as defined in 
\cite{bjorck1973numerical}. Mathematically, subspace error is expressed as:
\begin{displaymath}
\theta=\arcsin (\|(\M{B}-\M{A}\M{A}^\top\M{B})\|_{2})
\end{displaymath}
where $\theta$ is the angle based subspace error, $\M{A}$ is the orthonormal basis of the completed matrix and $\M{B}$ is the orthonormal basis of the true matrix.
With the simulation result in Fig 4(c), that subspace errors for all three algorithms start at 46\% sampling ratio.  For any sampling ratio lower than the threshold, subspace error for our proposed algorithm is the smallest among the three algorithms.
\vspace{-3mm}
\section{Conclusions}
This paper proposes an algorithm for sparse subspace clustering under missing data,  when data is assumed to come from a Union of Subspaces
(UoS) model, using a linear programming method, SSC-LP. Deterministic analysis of sufficient conditions when this algoirthm leads to correct clustering is presented. Extensive set of
simulation results for clustering as well as completion of data
under missing entries, under the UoS model are provided which demonstrate the effectiveness of the algorithm, and demonstrate that accurate
clustering does not imply accurate subspace identification.

In future we will derive the performance bounds in terms of average cases analysis of the deterministic conditions. In particular we will determine how the in-radius changes under element-wise sampling and how does the un-normalized subspace coherence behaves under these models.

\section{Appendix}
The following Lemma appears in \cite{soltanolkotabi2012}.
\begin{lemma}
\label{lem:1}
Let $\M{A}\in \mb{R}^{n \times N}$ and $\V{y} \in \mb{R}^{n}$ be given. If there exits $\V{c}$ obeying $\V{y} = \M{A} \V{c}$ with support $S \subset T$ and a dual certificate $\V{\nu}$ satisfying $$ \M{A}_{S}^{\top}\V{\nu} = \mbox{sign}(\V{c}_{S}), \,\, \|   \M{A}_{T\cap S^c}^{\top}\V{\nu} \|_{\infty}\leq 1, \,\, \|\M{A}_{T^c}^{\top}\V{\nu}\|_{\infty} < 1	$$ then all optimal solutions $\V{c}^*$ to $\arg \min_{\V{c}} \| \V{c}\|_{1}: \M{A} \V{y} = \V{c}$ satisfy $\V{c}_{T^c}^{*} = \mathbf{0}$.  
\end{lemma}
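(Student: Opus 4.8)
The plan is to run the standard dual-certificate argument for $\ell_1$ minimization. Fix an arbitrary optimal solution $\V{c}^*$ of the program $\min\|\V{c}\|_1$ subject to $\M{A}\V{c}=\V{y}$, and set $\V{h}=\V{c}^*-\V{c}$. Since both $\V{c}^*$ and the witness $\V{c}$ are feasible, $\M{A}\V{h}=\mathbf{0}$, i.e. $\V{h}$ lies in the null space of $\M{A}$. The goal is to show $\V{h}_{T^c}=\mathbf{0}$: because $\V{c}$ is supported on $S\subseteq T$ we have $\V{c}_{T^c}=\mathbf{0}$, so this immediately gives $\V{c}^*_{T^c}=\V{c}_{T^c}+\V{h}_{T^c}=\mathbf{0}$, which is the claim.

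Two ingredients feed into the argument. First, the certificate interacts with the null space: $0=\langle \V{\nu},\M{A}\V{h}\rangle=\langle \M{A}^\top\V{\nu},\V{h}\rangle$. Splitting the coordinates into the blocks $S$, $T\cap S^c$, and $T^c$ and using $\M{A}_S^\top\V{\nu}=\mathrm{sign}(\V{c}_S)$ turns this into $\langle \mathrm{sign}(\V{c}_S),\V{h}_S\rangle+\langle \M{A}_{T\cap S^c}^\top\V{\nu},\V{h}_{T\cap S^c}\rangle+\langle \M{A}_{T^c}^\top\V{\nu},\V{h}_{T^c}\rangle=0$. Second, optimality of $\V{c}^*$ gives $\|\V{c}^*\|_1\le\|\V{c}\|_1=\|\V{c}_S\|_1$; applying the subgradient inequality $\|\V{c}_S+\V{h}_S\|_1\ge\|\V{c}_S\|_1+\langle \mathrm{sign}(\V{c}_S),\V{h}_S\rangle$ on $S$ and using $\V{c}_{T\cap S^c}=\V{c}_{T^c}=\mathbf{0}$ on the remaining blocks yields $\langle \mathrm{sign}(\V{c}_S),\V{h}_S\rangle+\|\V{h}_{T\cap S^c}\|_1+\|\V{h}_{T^c}\|_1\le 0$.

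Then I would combine the two: substitute the expression for $\langle \mathrm{sign}(\V{c}_S),\V{h}_S\rangle$ read off from the null-space identity into the optimality inequality, eliminating the $S$-term so that only contributions on $T\cap S^c$ and $T^c$ remain. Bounding each inner product by H\"older, $|\langle \M{A}_{T\cap S^c}^\top\V{\nu},\V{h}_{T\cap S^c}\rangle|\le\|\M{A}_{T\cap S^c}^\top\V{\nu}\|_\infty\|\V{h}_{T\cap S^c}\|_1\le\|\V{h}_{T\cap S^c}\|_1$ and likewise on $T^c$, the $T\cap S^c$ contribution $\|\V{h}_{T\cap S^c}\|_1-\langle \M{A}_{T\cap S^c}^\top\V{\nu},\V{h}_{T\cap S^c}\rangle$ is nonnegative and may be discarded, leaving $\|\V{h}_{T^c}\|_1\big(1-\|\M{A}_{T^c}^\top\V{\nu}\|_\infty\big)\le 0$. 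The strict inequality $\|\M{A}_{T^c}^\top\V{\nu}\|_\infty<1$ makes the parenthesized factor strictly positive, forcing $\|\V{h}_{T^c}\|_1=0$ and hence $\V{h}_{T^c}=\mathbf{0}$.

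The argument is essentially routine; the only points requiring care are the bookkeeping over the three index blocks and the distinct role of each certificate condition. The bound on $T\cap S^c$ is merely $\le 1$, so that block contributes nothing to the separation and is only dropped as a nonnegative term, whereas the \emph{strict} bound on $T^c$ does all the work, which is also why the conclusion is support containment in $T$ rather than uniqueness or exact recovery. I would close by noting that the sign pattern $\mathrm{sign}(\V{c}_S)$ enters only through the subgradient inequality, so no hypothesis on $\V{c}$ beyond feasibility and $\mathrm{supp}(\V{c})=S\subseteq T$ is used.
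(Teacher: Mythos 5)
Your proof is correct and complete: the null-space identity against the certificate, the subgradient inequality from optimality, and the block-wise H\"older bounds (with the strict bound on $T^c$ doing the real work) are exactly the standard dual-certificate argument. Note that the paper itself gives no proof of this lemma --- it is quoted verbatim from the reference \cite{soltanolkotabi2012} --- and your argument coincides with the one given there, so you have simply supplied the proof the paper defers to its citation. One small remark: you silently corrected the typo in the statement of the optimization constraint, which is written as $\M{A}\V{y} = \V{c}$ but must be read as $\M{A}\V{c} = \V{y}$; that is indeed the intended reading, and your feasibility argument ($\M{A}\V{h}=\mathbf{0}$ for the difference of two feasible points) depends on it.
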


\vspace{-3mm}
\bibliographystyle{IEEEbib}
\bibliography{SSmC_Geom,SSmCbibliography,NewBib}

\end{document}